\title[ ]{A  lower bound on  the Lyapunov exponent for  the generalized
  Harper's model}
\author{ Svetlana Jitomirskaya}
\address[ Svetlana Jitomirskaya]{ Department of Mathematics, University of California, Irvine, California 92697-3875, USA}
\email{szhitomi@math.uci.edu}
\author{Wencai Liu}
\address[Wencai Liu]{Department of Mathematics, University of California, Irvine, California 92697-3875, USA} \email{liuwencai1226@gmail.com}
\newcommand{\R}{\mathbb{R}}
\theoremstyle{plain}
\newtheorem{theorem}{Theorem}[section]
\newtheorem{lemma}[theorem]{Lemma}
\theoremstyle{definition}
\newtheorem{remark}[theorem]{Remark}
\begin{document}


\begin{abstract}
We obtain a
  lower bound for the  Lyapunov exponent  of  a family of discrete Schr\"{o}dinger operators $(Hu)_n=u_{n+1}+u_{n-1}+2a_1\cos2\pi(\theta+n\alpha)u_n+2a_2\cos4\pi(\theta+n\alpha)u_n$,
that incorporates both $a_1$ and $a_2,$ thus going beyond the Herman's bound. 

\end{abstract}
\maketitle
\centerline{ \small \it {Dedicated to David Ruelle and Yasha Sinai on
    the occasion of their 80th birthdays}}
 \section{Introduction}

While Lyapunov exponents of analytic one-frequency Schr\"odinger cocycles are positive when the
coupling constants are large (uniformly in frequency and in
energy)\cite{ss,bourgain2005green}, in general, there have been no explicit/quantitative
lower estimates. The only such known explicit estimates are celebrated
Herman's subharmonicity bounds: for $v(x)=\sum_{j=-M}^Ma_j e^{ 2\pi
  jx}, a_j=\bar{a}_{-j},$
there is a uniform bound: $L(E)\geq\ln |a_M|$. This is of course not
effective if $|a_M|$ is small. In this paper we present the first
explicit bound that takes into account other coefficients.

Namely, we consider  the discrete Schr\"{o}dinger operator  of the following form:
\begin{equation}\label{Def.Schrodingeroperator2}
     (Hu)_n=u_{n+1}+u_{n-1}+2a_1\cos2\pi(x+n\alpha)u_n+2a_2\cos4\pi(x+n\alpha)u_n,
\end{equation}
where $a_1,a_2\in\mathbb{R}$. 
It is known in physics literature as generalized Harper's model
(e.g. \cite{23,38}), of special  interest because of its connection to
the three dimensional quantum Hall effect\cite{23,35}.
\par
 Analyzing the solution of the equation  $Hu=Eu$ from the dynamical
 systems point of view  leads to the consideration of
   Schr\"{o}dinger   cocycle $(\alpha,A)$, where
 \begin{equation}\label{Def.cocycle}
   A(x)=\left(
          \begin{array}{cc}
            E- 2a_1\cos2\pi x-2a_2\cos4\pi x &  -1 \\
            1 & 0 \\
          \end{array}
        \right).
 \end{equation}
    The  Lyapunov exponent(LE) of $(\alpha,A)$ is given  by
 \begin{equation*}
   L^E=\lim_{n\rightarrow\infty} \frac{1}{n}\int_{\mathbb{R} / \mathbb{Z}} \ln \|A_n(x)\|dx,
 \end{equation*}
where
\begin{equation*}
     A_n(x) = A(x+(n-1)\alpha)A(x+(n-2)\alpha)\cdots A(x) \text{ for } n\geq 0
\end{equation*}
and $A_{-n}(x)=A_n(x-n\alpha)^{-1}$.  We call $A_n(x)$   the $n$ step transfer matrix.
Notice that $A$ depends on $E$; sometimes we will drop this dependence
from the notations, for simplicity.
\par

\par
By Herman's arguments \cite{herman1983methode},
we have  that $L^E\geq \ln |a_2|$ for all $E\in \R$. This yields  that
$L^E>0$ if $|a_2|>1$. Also, $L^E\geq \ln
 |a_1|$ if $a_2=0$. By the continuity of Lyapunov exponent in the cocycle (Theorem
 \ref{ContinuousLE}), we still have     $L^E> 0$ if $|a_1|>1$ and
 $a_2$ is small. However, Herman's method  does not work to obtain a
 lower bound no matter how small  a nonzero $a_2$ is . In
 \cite{ss,bourgain2005green} a lower bound is obtained abstractly,
 for any analytic $v$ in terms of its analytic extension, but in both
 cases in ways that so far have not led to concrete analytic bounds.   In this paper we obtain a
 quantitative lower bound on the Lyapunov exponent,
in terms of $\ln |a_1|$ and $|\frac{a_2}{a_1}|^{\frac{1}{2}}$.
 \par
 \begin{theorem}\label{Maintheorem}
 Suppose $|a_1|>1 $ and $|a_2|<\frac{1}{100}|a_1|$.
Then  for all $E\in \R$,
\begin{equation}\label{Lowerbound}
    L^E\geq  \ln|a_1|-10|\frac{a_2}{a_1} |^{\frac{1}{2}}.
\end{equation}

 \end{theorem}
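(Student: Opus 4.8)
The plan is to analyze the Lyapunov exponent of the complexified cocycle and exploit its convexity together with Avila's quantization of the acceleration. For $\varepsilon\in\R$ consider $A(\cdot+i\varepsilon)$ and let $L(\varepsilon)$ denote the corresponding Lyapunov exponent, so that $L(0)=L^E$. Writing $z=e^{2\pi i(x+i\varepsilon)}$, the diagonal entry $E-2a_1\cos2\pi(x+i\varepsilon)-2a_2\cos4\pi(x+i\varepsilon)$ becomes the Laurent polynomial $E-a_1(z+z^{-1})-a_2(z^2+z^{-2})$, so $A$ extends holomorphically and $L(\varepsilon)$ is convex in $\varepsilon$; since $a_1,a_2$ are real it is also even, hence minimized at $\varepsilon=0$. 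First I would record the two exact pieces of information available at infinity: Herman's subharmonicity argument gives $L(\varepsilon)=4\pi\varepsilon+\ln|a_2|+o(1)$ as $\varepsilon\to+\infty$ (the top frequency is $2$, with coefficient $a_2$), while Avila's theorem says the acceleration $\omega(\varepsilon)=\tfrac1{2\pi}L'(\varepsilon)$ is a nonnegative integer, bounded here by the degree $2$ of the trigonometric polynomial. Convexity then forces $L$ to be piecewise linear with slopes in $\{0,2\pi,4\pi\}$; in particular there is a threshold $\varepsilon_2$ beyond which the slope equals $4\pi$ and $L(\varepsilon)=4\pi\varepsilon+\ln|a_2|$ identically.

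With this structure the estimate becomes one–dimensional. Since the slope of $L$ cannot exceed $2\pi$ on $[0,\varepsilon_2)$ (the only available value below the top one), integrating the derivative from $\varepsilon_2$ down to $0$ gives
\[
L(0)\ \ge\ L(\varepsilon_2)-2\pi\varepsilon_2\ =\ (4\pi\varepsilon_2+\ln|a_2|)-2\pi\varepsilon_2\ =\ 2\pi\varepsilon_2+\ln|a_2|.
\]
Everything thus reduces to a lower bound on the threshold $\varepsilon_2$, the smallest height at which the maximal acceleration $2$ switches on. Because the acceleration is nondecreasing for $\varepsilon>0$, it suffices to show it is still $\le 1$ at the height $\varepsilon_*$ defined by $e^{2\pi\varepsilon_*}=|a_1/a_2|\,e^{-10|a_2/a_1|^{1/2}}$; this yields $2\pi\varepsilon_2\ge 2\pi\varepsilon_*=\ln|a_1/a_2|-10|a_2/a_1|^{1/2}$ and, substituted above, exactly the claimed bound $L^E\ge\ln|a_1|-10|a_2/a_1|^{1/2}$.

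To control the acceleration at $\varepsilon_*$ I would determine which Fourier mode dominates the complexified cocycle there. On the line $\Im x=\varepsilon_*$ one has $|z^{-1}|=e^{2\pi\varepsilon_*}$, so the frequency-one term $-a_1z^{-1}$ has modulus $|a_1|e^{2\pi\varepsilon_*}$ while the top term $-a_2z^{-2}$ has modulus $|a_2|e^{4\pi\varepsilon_*}$; by the choice of $\varepsilon_*$ the ratio of the latter to the former is $e^{-10|a_2/a_1|^{1/2}}<1$, so the frequency-one term strictly dominates. Factoring $-a_1z^{-1}$ out of the diagonal entry exhibits $A(x+i\varepsilon_*)$ as a frequency-one symbol times $1+g$ with $g$ of modulus below $1$, hence of winding number $1$; by Avila's identification of the acceleration with this winding number (equivalently, by the Jensen/Thouless representation $L(\varepsilon)=\ln|a_2|+4\pi\varepsilon+\lim_n\frac1n\sum_{|\zeta|<e^{-2\pi\varepsilon}}\ln\frac{e^{-2\pi\varepsilon}}{|\zeta|}$ summed over the roots $\zeta$ of the Dirichlet determinant), the acceleration at $\varepsilon_*$ is at most $1$.

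The main obstacle is making this domination uniform in $E\in\R$: the diagonal entry carries the constant term $E$, which for $E$ in the spectral region is comparable to $|a_1|$, and one must verify that $g$ stays of modulus below $1$ for every such $E$ and every real $x$—so that no additional zero, hence no extra unit of acceleration, is produced—including the band-edge regime where the remaining reciprocal pair of roots sits near the unit circle. The generous square-root slack $10|a_2/a_1|^{1/2}$, far larger than the true error (a direct perturbative location of the large root of the quartic $z^4+\tfrac{a_1}{a_2}z^3-\tfrac{E}{a_2}z^2+\tfrac{a_1}{a_2}z+1$ shows it to be $O(|a_2/a_1|)$), is precisely what secures this uniform domination and keeps the argument free of case analysis.
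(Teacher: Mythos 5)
Your reduction is sound and, in outline, is the same as the paper's: complexify, use evenness, convexity and quantization of the acceleration, identify the asymptotic line $L^E(\epsilon)=4\pi\epsilon+\ln|a_2|$ (the paper's $\ell_2$), observe that below the first height with acceleration $2$ the slope is at most $2\pi$, and so reduce the theorem to showing the acceleration is still $\le 1$ at a suitable test height. The gap is in the one step that carries all the analytic content: your claim of \emph{uniform-in-$E$} domination at the single height $\varepsilon_*$ with $e^{2\pi\varepsilon_*}=|a_1/a_2|e^{-10|a_2/a_1|^{1/2}}$ is false, and you offer no argument for it beyond the assertion that the square-root slack ``secures'' it. The slack controls only the ratio of the $a_2$-term to the $a_1$-term ($e^{-10|a_2/a_1|^{1/2}}<1$); it does nothing about the near-cancellation between the constant $E$ and the frequency-one part, which is independent of $a_2$. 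Concretely, for the real energy $E=a_1\bigl(e^{2\pi\varepsilon_*}+e^{-2\pi\varepsilon_*}\bigr)+a_2\bigl(e^{4\pi\varepsilon_*}+e^{-4\pi\varepsilon_*}\bigr)$ (of size about $a_1^2/a_2$, perfectly admissible since the theorem is claimed for all $E\in\R$) the diagonal entry vanishes at $x=0$ on the line $\Im x=\varepsilon_*$, so after factoring out $-a_1z^{-1}$ the symbol $1+g$ has a zero \emph{on} the circle $|z|=e^{-2\pi\varepsilon_*}$: $|g|$ is not below $1$, the winding-number identification of the acceleration cannot be invoked, and for nearby $E$ zeros cross the circle, which is exactly the mechanism by which the acceleration can change at that height. (Within the spectral region $|E|\lesssim a_1$ your domination does hold, because such $E$ lies deep inside the very large ellipse at height $\varepsilon_*$; it is the nonspectral resonant energies that break the claim, so a case split on $E$ — the thing you explicitly claim to avoid — is unavoidable in your scheme, even though for those large $E$ the bound is easy by other means.)

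This failure mode is precisely what the paper's ``elementary geometric considerations'' are for: Lemma \ref{LeMinestimate} shows that a real $E$ cannot be simultaneously close to the two ellipses $S_{\epsilon_0}$ and $S_{2\epsilon_0}$ traced by $a_1e^{-2\pi\delta}e^{2\pi ix}+a_1e^{2\pi\delta}e^{-2\pi ix}$, so for every $E$ at least one of the two heights $\delta\in\{\epsilon_0,2\epsilon_0\}$ gives the uniform lower bound $\frac{19}{60}a_1e^{4\pi\epsilon_0}$ on the dominant part; this makes the diagonal entry exceed $2$ pointwise, so the matrix-product estimate of Lemma \ref{EstimateA_j} applies (note this lemma, or something like it, is also needed to justify your tacit passage from domination of the scalar symbol to a statement about the $\mathrm{SL}(2,\mathbb{C})$ cocycle's acceleration, since the off-diagonal entries $\pm1$ must also be beaten), and Herman's integral plus the error term give $L^E(\delta)\ge 2\pi\delta+\ln a_1-10(a_2/a_1)^{1/2}$, after which the same convexity/quantization bookkeeping as yours (the point cannot lie on $\ell_2$) finishes the proof. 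Note also that the paper tests at $e^{4\pi\epsilon_0}=(a_1/a_2)^{1/2}$, halfway up in logarithmic scale, where the $a_2$-to-$a_1$ ratio is at most $1/10$, rather than at your $\varepsilon_*$ next to the crossing of $\ell_1$ and $\ell_2$, where that ratio tends to $1$ and the domination you need is most fragile. To repair your argument you must either import such a two-height dichotomy or treat the resonant/large-$E$ regime separately; as written, the crucial acceleration bound at $\varepsilon_*$ is asserted, not proved, and is false as stated.
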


 Our result is obtained by combining the basics of Avila's global
 theory   \cite{avila2009global} with Herman's subharmonicity  method
 \cite{herman1983methode} as developed by Bourgain \cite{bourgain2005green}, coupled with some
 elementary geometric considerations.

{\bf Remark.}
After our result was obtained, we learned that  Marx, Shou, and Wellens     obtained  a general lower bound
for the Lyapunov exponent improving Herman's results, also using the
global theory \cite{marx}.  However it is  difficult  to extract
analytic quantitative results for a concrete potential from their
estimates, although numerical results are possible, leading to
estimates better than ours for certain parameters.
 \section{Avila's Global Theory}
 Denote by $  C^{\omega}(\mathbb{T},\text{SL}(2,C))$ the class of $1-$periodic functions on $\mathbb{C}$,
with analytic extension to some strip $|\Im z|<\delta$, attaining values in  $\text{SL}(2,C)$. Notice that matrix $A$  given by (\ref{Def.cocycle}) is in
$  C^{\omega}(\mathbb{T},\text{SL}(2,C))$.
The following theorem shows the continuity of LE in the analytic category.
\begin{theorem}(\cite{jitomirskaya2005localization,bj})\label{ContinuousLE}
The Lyapunov exponent $ L(\alpha,D)$  is  a continuous function in
$\mathbb{R}\times C^{\omega}(\mathbb{T} ,\text{SL}(2,C))$ at every
$(\alpha,A)$ with $\alpha \in \mathbb{R}\backslash \mathbb{Q}$.
\end{theorem}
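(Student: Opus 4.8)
The plan is to establish continuity at $(\alpha,D)$ with $\alpha\in\mathbb{R}\setminus\mathbb{Q}$ by proving the two semicontinuity inequalities separately, since only one of them will require irrationality. For the upper bound I would use that, by submultiplicativity of the transfer matrix norms and Fekete's lemma,
\[
L(\alpha,D) = \inf_{n\ge 1} L_n(\alpha,D), \qquad L_n(\alpha,D) := \tfrac{1}{n}\int_{\mathbb{R}/\mathbb{Z}} \ln\|D_n(x)\|\,dx,
\]
where $D_n$ denotes the $n$-step transfer matrix of $D$. Each $L_n$ is jointly continuous in $(\alpha,D)$: the integrand is a bounded continuous function of $x$ and of the cocycle data (the $C^\omega$ topology controls the supremum on $\mathbb{T}$), integrated over the compact set $\mathbb{T}$, so dominated convergence applies. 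An infimum of continuous functions is upper semicontinuous, hence $\limsup_{(\alpha',D')\to(\alpha,D)} L(\alpha',D') \le L(\alpha,D)$. This step uses neither irrationality nor anything about $\mathrm{SL}(2,\mathbb{C})$ beyond $\|D_n\|\ge 1$.

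For lower semicontinuity, the substantive half, I would first dispose of the degenerate case: since $\det D\equiv 1$ forces $\|D_n(x)\|\ge 1$ and hence $L\ge 0$ always, the inequality $\liminf L(\alpha',D')\ge 0$ is automatic when $L(\alpha,D)=0$; so assume $L(\alpha,D)>0$. The engine is subharmonicity: $u_n(z):=\tfrac{1}{n}\ln\|D_n(z)\|$ extends subharmonically to the strip $|\Im z|<\delta$ on which $D$ is analytic, and its supremum over a slightly smaller strip is bounded \emph{uniformly} for $D'$ in a neighborhood of $D$. From this one extracts a large deviation estimate: the measure of $\{x : |u_n(x)-L_n(\alpha',D')|>\eta\}$ is small, with a bound depending only on the uniformly controlled strip supremum, obtained via a BMO bound on $u_n$ and the John--Nirenberg inequality.

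The remaining, and hardest, step is to convert concentration of $u_n$ into quantitative control of the convergence $L_n\to L$ that is uniform over a neighborhood of $(\alpha,D)$. Here I would invoke the avalanche principle of Goldstein--Schlag, applied to the blocks $D_{q_k}(x+jq_k\alpha)$ taken along the continued-fraction denominators $q_k$ of $\alpha$: the near-multiplicativity of norms supplied by the large deviation estimate lets one telescope $L_{mq_k}$ in terms of $L_{q_k}$ and $L_{2q_k}$, yielding a rate of convergence of $L_n$ to $L$ governed by quantities that are continuous in the cocycle. This is exactly where irrationality enters, through the existence and stability of the return structure encoded by the $q_k$, and it is also the point most sensitive to perturbation. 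I expect the main obstacle to be ensuring that the inputs to the avalanche principle — the large-deviation bound and the nondegeneracy furnished by $L>0$ (and the accompanying gap between expanding and contracting directions) — persist uniformly as $(\alpha',D')$ ranges over a neighborhood; this uniformity is precisely what forces the restriction to irrational $\alpha$. Combining the resulting reverse inequality $\liminf_{(\alpha',D')\to(\alpha,D)} L(\alpha',D')\ge L(\alpha,D)$ with the upper semicontinuity above yields continuity at $(\alpha,D)$.
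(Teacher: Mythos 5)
The paper does not prove this theorem --- it quotes it from Bourgain--Jitomirskaya and Jitomirskaya--Koslover--Schulteis --- and your outline reconstructs essentially the argument of those cited sources: upper semicontinuity from $L=\inf_n L_n$ with each $L_n$ jointly continuous, and lower semicontinuity via uniform subharmonic bounds on a strip, a large deviation estimate obtained through BMO and John--Nirenberg, and the Goldstein--Schlag avalanche principle run along the continued-fraction denominators $q_k$ of the irrational frequency. Your proposal is correct in structure and correctly identifies where irrationality and the uniformity of the large-deviation input enter, so it matches the standard proof rather than offering a different route.
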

The global theory   is based on the complexification of the cocycle. More precisely, given any analytic  cocycle $(\alpha,D)$,
we consider its analytic  extension, $(\alpha,D(x+i\epsilon))$  with  $|\epsilon|<\delta$ for some appropriate $\delta>0$.
$L(\alpha,D_\epsilon)$ is referred to the Lyapunov exponent of the  complexified cocycle $(\alpha,D(x+i\epsilon))$. Let $L(\alpha,D)=L(\alpha,D_0)$.
\par
The   Lyapunov exponent $L(\alpha,D_\epsilon)$ is known  to be a
convex function of $\epsilon$ (e.g. Proposition A.1, \cite{jitomirskaya2012analytic}). Thus we can introduce the acceleration of
$(\alpha,D)$,
\begin{equation*}
      \omega(\alpha,D;\epsilon)=\frac{1}{2\pi}\lim_{h\rightarrow0+}\frac{L(\alpha,D_{\epsilon+h})-L(\alpha,D_\epsilon)}{h}.
\end{equation*}
It follows from convexity and continuity of the Lyapunov exponent that the acceleration is an upper semi-continuous function  with respect to  parameter $\epsilon$
\cite{jitomirskaya2012analytic}.

The acceleration has an important property, which is stated below.
\begin{theorem}(Quantization of acceleration\cite{avila2009global,avila2013complex})\label{QAT}
Suppose $ D\in C^{\omega}(\mathbb{T},\text{SL}(2,C))$ and $\alpha\in \mathbb{R}\backslash \mathbb{Q}$, then we have
$ \omega(\alpha,D;\epsilon)\in  \mathbb{Z}$.
\end{theorem}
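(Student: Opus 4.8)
The plan is to convert the integrality of $\omega$ into a statement about the \emph{slopes} of the convex function $\epsilon\mapsto L(\alpha,D_\epsilon)$ and then to recognize each such slope as a topological degree, which is automatically an integer. Since this function is convex in $\epsilon$, its right derivative exists everywhere and equals $2\pi\,\omega(\alpha,D;\epsilon)$, and on any interval where the derivative is locally constant the function is affine. Thus it suffices to prove that every slope of $\epsilon\mapsto L(\alpha,D_\epsilon)$ belongs to $2\pi\mathbb{Z}$.

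First I would complexify and clear denominators. Writing $z=e^{2\pi i(x+i\epsilon)}$, so that $|z|=e^{-2\pi\epsilon}$, every entry of $D(x+i\epsilon)$ is a Laurent polynomial in $z$; choosing $p\in\mathbb{N}$ large enough to remove all poles, the matrix $B(z):=z^{p}D$ has entries that are holomorphic on all of $\mathbb{C}$ and $\det B(z)=z^{2p}$. Since $B$ differs from $D$ by the scalar factor $z^{p}$, the transfer matrices obey $\|B_n(x+i\epsilon)\|=e^{-2\pi p n\epsilon}\|D_n(x+i\epsilon)\|$, whence $L(\alpha,B_\epsilon)=L(\alpha,D_\epsilon)-2\pi p\epsilon$ and the two accelerations differ by the integer $p$. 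It therefore suffices to prove quantization for $B$. As $\epsilon\to+\infty$ (so $z\to 0$) the lowest-degree monomial of $B$ dominates, and as $\epsilon\to-\infty$ the highest-degree monomial dominates; a direct estimate then gives $L(\alpha,B_\epsilon)=2\pi d_{\pm}\epsilon+c_{\pm}+o(1)$ with integer degrees $d_{\pm}$, so $\omega\in\mathbb{Z}$ already for $|\epsilon|$ large.

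The heart of the argument concerns the intermediate range. On each maximal open interval where $L(\alpha,B_\epsilon)$ is positive and affine, the complexified cocycle is uniformly hyperbolic, so it admits a dominated splitting whose invariant (most expanded) direction varies holomorphically with the base point, defining a holomorphic map $u_\epsilon:\mathbb{T}\to\mathbb{C}P^1$. The plan is to show that the constant slope of $L$ on this interval equals the winding number (topological degree) of $u_\epsilon$, evaluated as $\frac{1}{2\pi i}\oint d\log$ of a local representative; being a degree, this number is an integer, and it is locally constant in $\epsilon$ because uniform hyperbolicity is open and the degree is a homotopy invariant. Computationally the degree is the number of zeros, inside the circle $|z|=e^{-2\pi\epsilon}$, of a holomorphic function assembled from the entries of $B$, and Jensen's formula identifies this count with the jump in slope of the associated logarithmic potential. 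The finitely many corners of the convex function, together with any interval on which $L\equiv 0$, are then dealt with using the upper semicontinuity of the acceleration and the integrality already established on the neighboring hyperbolic intervals.

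The hard part will be the construction of the holomorphic invariant section $u_\epsilon$ and the exact identification of its degree with the slope of $L$, which is precisely where Avila's global theory is indispensable: because $\|B_n\|$ is a matrix norm rather than the modulus of a single analytic function, a naive finite-$n$ application of Jensen's formula only yields slopes in $\frac{1}{n}\cdot 2\pi\mathbb{Z}$ and collapses in the limit. The decisive point is that the acceleration is genuinely a topological degree, insensitive to the normalization by $n$, rather than a metric slope; making this equality rigorous, uniformly up to the boundary of the region of hyperbolicity, is the crux of the whole proof.
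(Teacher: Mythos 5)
The paper does not actually prove Theorem \ref{QAT}: it is imported wholesale from \cite{avila2009global,avila2013complex}, so your sketch can only be measured against the proofs in those references. Your overall architecture --- read the slopes of the convex function $\epsilon\mapsto L(\alpha,D_\epsilon)$ as topological degrees of a holomorphically varying invariant direction, compute the degree via Jensen's formula, and handle corners by convexity --- is genuinely the circle of ideas of the global theory, but two of your steps fail as stated. A general $D\in C^{\omega}(\mathbb{T},\mathrm{SL}(2,\mathbb{C}))$ extends only to a finite strip $|\Im z|<\delta$; in the variable $z=e^{2\pi i(x+i\epsilon)}$ its entries are holomorphic on an \emph{annulus}, with infinite Laurent series, not Laurent polynomials. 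So the $z^{p}$-clearing reduction does not apply and, more seriously, your entire base case (``$\omega\in\mathbb{Z}$ already for $|\epsilon|$ large because a single monomial dominates'') is unavailable: there is no large-$\epsilon$ regime at all, and quantization must be established at every $\epsilon$ inside the strip. Even in the trigonometric-polynomial case the dominating-monomial estimate is wrong when the leading matrix coefficient is nilpotent: for $D(z)=\left(\begin{smallmatrix}1&z^{-1}\\0&1\end{smallmatrix}\right)$ one has $D_n=\left(\begin{smallmatrix}1&\sum_j z_j^{-1}\\0&1\end{smallmatrix}\right)$, so $L(\alpha,D_\epsilon)\equiv 0$ and the slope is $0$, not the naive degree $1$. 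The analogous computation in the paper (the lemma giving $L^{E}(\epsilon)=4\pi\epsilon+\ln|a_2|$) survives only because $A_{\infty}$ there has nonzero spectral radius --- and note that even that lemma \emph{uses} Theorem \ref{QAT} to promote an asymptotic identity to an exact one, rather than proving it.

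The deeper gap is logical. The implication you lean on --- $L$ positive and affine on an open interval implies uniform hyperbolicity, with a holomorphic unstable section $u_\epsilon$ --- is itself one of the main theorems of the global theory, and in both cited references it is \emph{derived from} quantization (via the characterization of domination by locally constant acceleration), not used to prove it; as structured your argument is circular unless you give an independent proof of this implication, which is exactly the hard analysis (upgrading convergence of $\frac1n\int\log\|D_n\|\,dx$ to uniform convergence of $\frac1n\log\|D_n(z)\|$ by subharmonicity, then extracting an invariant splitting from uniform norm growth). You also implicitly assume $L$ is piecewise affine with finitely many corners, so that affine intervals plus corners exhaust the strip; a priori the set where a convex function is not locally affine can be a Cantor set, and finiteness of the corners is a \emph{consequence} of quantization, not an admissible input --- integrality of the one-sided derivative must be proved at every $\epsilon$, not only on affine pieces. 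The actual proofs avoid both problems: \cite{avila2009global} establishes quantization directly by periodic approximation (for rational frequencies the statement reduces, through traces, determinants and Jensen's formula, to counting zeros of analytic functions), and \cite{avila2013complex} extends it to general, possibly singular, complex cocycles; the dominated-splitting/degree picture you describe is then obtained as a corollary, not as an ingredient.
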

 Applying Theorem \ref{QAT} to Schr\"{o}dinger cocycle $(\alpha,A)$, where $A$ is given by (\ref{Def.cocycle}), we can obtain the following lemma. Below, we always assume
  $A$ is given by (\ref{Def.cocycle}),   $L^E(\epsilon)=L(\alpha,A(\cdot+i\epsilon))$ and $L^E=L^E(0)$.
  By the symmetry ( $L^E(\epsilon)=L^E(-\epsilon)$ ) and convexity of
  $L^E(\epsilon)  $,   we know  that $ \omega(\alpha,D; 0)\geq 0$.
\begin{lemma}
For $\alpha\in \mathbb{R}\backslash\mathbb{Q}$,
there exists an $\epsilon'>0$ such that $L^E(\epsilon)=4\pi \epsilon+\ln|a_2|$ for $\epsilon>\epsilon'$.
\end{lemma}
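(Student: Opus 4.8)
The plan is to first determine the asymptotic behaviour of $L^E(\epsilon)$ as $\epsilon\to+\infty$ by a direct transfer-matrix estimate, and then to promote this asymptotic identity to an exact one on a half-line, using convexity of $\epsilon\mapsto L^E(\epsilon)$ together with the quantization of the acceleration (Theorem \ref{QAT}). Write $v(x)=E-2a_1\cos2\pi x-2a_2\cos4\pi x$ and set $z=e^{2\pi i x}$, so that $v(x+i\epsilon)=-a_2 e^{4\pi\epsilon}z^{-2}\bigl(1+P(x,\epsilon)\bigr)$, where $P$ collects the remaining Fourier modes. Since $|z|=1$ on the real axis, every term of $P$ carries a strictly negative power of $e^{2\pi\epsilon}$, whence $\sup_{x}|P(x,\epsilon)|\to0$ as $\epsilon\to+\infty$. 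In particular $|v(x+i\epsilon)|\ge \tfrac12|a_2|e^{4\pi\epsilon}$ uniformly in $x$ once $\epsilon$ is large, and $\int_{\T}\log|v(x+i\epsilon)|\,dx=4\pi\epsilon+\ln|a_2|+o(1)$.

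For the upper bound I would use subadditivity (Fekete), which gives $L^E(\epsilon)\le\int_{\T}\log\|A(x+i\epsilon)\|\,dx$; since $A=\mathrm{diag}(v,0)+\bigl(\begin{smallmatrix}0&-1\\1&0\end{smallmatrix}\bigr)$, the triangle inequality for the operator norm yields $\|A(x+i\epsilon)\|\le|v(x+i\epsilon)|+1$, and integrating gives $L^E(\epsilon)\le 4\pi\epsilon+\ln|a_2|+o(1)$. For the matching lower bound I would track the growth of $A_n(x+i\epsilon)e_1=(u_{n+1},u_n)^{\mathsf T}$, where $u_{m+1}=v(x+m\alpha+i\epsilon)u_m-u_{m-1}$ with $u_0=0$, $u_1=1$. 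A short induction on the ratios $u_{m-1}/u_m$ shows that, as long as $|v(x+m\alpha+i\epsilon)|\ge2$, one has $|u_n(x)|\ge\prod_{m=1}^{n-1}\bigl(|v(x+m\alpha+i\epsilon)|-1\bigr)$. Since $\|A_n\|\ge|u_n|$ and the torus integral is translation invariant, integrating $\tfrac1n\log|u_n|$ and letting $n\to\infty$ gives $L^E(\epsilon)\ge\int_{\T}\log\bigl(|v(x+i\epsilon)|-1\bigr)\,dx=4\pi\epsilon+\ln|a_2|-o(1)$. Combining the two bounds yields $L^E(\epsilon)-4\pi\epsilon\to\ln|a_2|$.

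Finally I would upgrade this limit to exact equality. The function $\epsilon\mapsto L^E(\epsilon)$ is convex, so its right derivative is nondecreasing and equals $2\pi\,\omega(\alpha,A;\epsilon)$, which by Theorem \ref{QAT} lies in $2\pi\Z$. The asymptotics $L^E(\epsilon)/\epsilon\to4\pi$ force this derivative to converge to $4\pi$; being nondecreasing and an integer multiple of $2\pi$, it must therefore equal $4\pi$ for all $\epsilon$ beyond some $\epsilon'$. Hence $L^E(\epsilon)=4\pi\epsilon+c$ on $(\epsilon',\infty)$, and comparing with the limit above identifies $c=\ln|a_2|$. The main obstacle is the asymptotic lower bound: one must verify that the off-diagonal entries of the transfer matrices do not spoil the growth dictated by the dominant diagonal term, which is exactly what the ratio induction and the uniform lower bound on $|v(x+i\epsilon)|$ secure.
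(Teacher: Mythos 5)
Your proof is correct, and its second half --- convexity of $\epsilon\mapsto L^E(\epsilon)$ plus quantization (Theorem \ref{QAT}) forcing the nondecreasing, $2\pi\Z$-valued right derivative to equal $4\pi$ on a half-line, then matching the constant with the limit --- is exactly the paper's concluding step. Where you genuinely diverge is in how you obtain the asymptotics $L^E(\epsilon)=4\pi\epsilon+\ln|a_2|+o(1)$. The paper argues softly: it writes $A(x+i\epsilon)=a_2e^{4\pi\epsilon}e^{-4\pi ix}\bigl(A_\infty+o(1)\bigr)$ uniformly in $x$, with $A_\infty=\bigl(\begin{smallmatrix}-1&0\\0&0\end{smallmatrix}\bigr)$, and invokes continuity of the Lyapunov exponent for the renormalized complex cocycle. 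You instead prove two-sided hard bounds: the subadditivity upper bound $L^E(\epsilon)\le\int_{\T}\ln\|A(x+i\epsilon)\|\,dx\le\int_{\T}\ln\bigl(|v(x+i\epsilon)|+1\bigr)dx$, and the Herman-style lower bound $\|A_n\|\ge\prod\bigl(|v|-1\bigr)$ via the ratio induction --- which is essentially the paper's own Lemma \ref{EstimateA_j}, deployed there only later, in the proof of Theorem \ref{Maintheorem}. Your route buys self-containedness and sidesteps a delicate point in the soft argument: the limit matrix $A_\infty$ is identically singular and not in $\mathrm{SL}(2,\mathbb{C})$, so Theorem \ref{ContinuousLE} as stated does not literally apply, and one must appeal to the continuity theory for $M(2,\mathbb{C})$-valued cocycles of \cite{avila2013complex}; your explicit estimates, together with the uniform bound $|v(x+i\epsilon)|\ge\frac12|a_2|e^{4\pi\epsilon}\ge 2$, avoid that machinery entirely. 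The paper's route buys brevity and makes the value $\omega=2$ of the acceleration transparent by exhibiting the asymptotic cocycle. One cosmetic remark: your normalization $u_0=0$, $u_1=1$ does not quite match $A_n e_1=(u_{n+1},u_n)^{\mathsf T}$ under the paper's convention (the paper's Lemma \ref{EstimateA_j} tracks the first column entries $b_{11}^k,b_{21}^k$, i.e.\ initial data $(1,0)^{\mathsf T}$); this off-by-one in the initial condition is harmless for the $n\to\infty$ averages but worth fixing for a clean write-up.
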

\begin{proof}
First, uniformly for $x\in \mathbb{T}={\mathbb{R} / \mathbb{Z}}$,  one has
 \begin{equation*}
        A(x+i\epsilon)= a_2e^{4\pi \epsilon} e^{-4\pi ix}(A_{\infty}+o(1)),
 \end{equation*}
 as $\epsilon\rightarrow +\infty$,
 where
 \begin{equation*}
    A_{\infty}=\left(
                         \begin{array}{cc}
                           -1 &0 \\
                           0 &0 \\
                         \end{array}
                       \right).
 \end{equation*}
 By the continuity of the LE,
 \begin{equation*}
    L^E(\epsilon)= 4\pi \epsilon+\ln|a_2|,
 \end{equation*}
  as $\epsilon\rightarrow +\infty$.
  \par
  By the quantization of acceleration, we obtain
   \begin{equation*}
   L^E(\epsilon)= 4\pi \epsilon+\ln|a_2|, \text{ for all }   \epsilon>0  \text{ sufficiently large}.
  \end{equation*}

\end{proof}
In addition to the convexity and continuity of LE with respect to $\epsilon$,  we have that
the graph $(\epsilon,L^E(\epsilon))$   must be one of the four
pictures below (we assume here $L^E>0 $).
\par

\begin{tikzpicture}

\draw [->](-0.5,0)--(4.5,0);
\draw [->](0,0)--(0,4.5);
\draw (0,0.45)--(1.5,1.8);
\draw [->](1.5,1.8)--(4,4);
\node [right] at (0,4){$L^E $};
\node [above] at (2.3,2.3){$\ell_2$};
\node [below] at (0,-0.3){Fig.1};
\node [below] at (4,0){$\epsilon$};
\end{tikzpicture}
\;\;\;\; \;\;\;\; \;\;\;\; \;\;\;\; \;\;\;\; \;\;\;\;\;\;\;\; \;\;\;\;
\begin{tikzpicture}
\draw [->](-0.5,0)--(4.5,0);
\draw [->](0,0)--(0,4.5);
\draw (0,1)--(1.5,1.8);
\draw [->](1.5,1.8)--(4,4);
\node [right] at (0,4){$L^E $};
\node [above] at (0.7,0.9){$\ell_1$};
\node [above] at (2.3,2.0){$\ell_2$};
\node [below] at (0,-0.3){Fig.2};
\node [below] at (4,0){$\epsilon$};

\end{tikzpicture}
\par
\begin{tikzpicture}
\draw [->](-0.5,0)--(4.5,0);
\draw [->](0,0)--(0,4.5);
\draw (0,1)--(0.5,1);
\draw (0.5,1)--(1.5,1.8);
\draw [->](1.5,1.8)--(3.5,4.5);
\node [right] at (0,4){$L^E $};
\node [above] at (1.0,0.9){$\ell_1^{\prime}$};
\node [above] at (2.5,2.5){$\ell_2$};
\node [below] at (0,-0.3){Fig.3};
\node [below] at (4,0){$\epsilon$};
\node [above] at (0.3,0.9){$\ell_3$};
\end{tikzpicture}
\;\;\;\; \;\;\;\; \;\;\;\; \;\;\;\; \;\;\;\; \;\;\;\;\;\;\;\; \;\;\;\;
\begin{tikzpicture}
\draw [->](-0.5,0)--(4.5,0);
\draw [->](0,0)--(0,4.5);
\draw (0,1)--(1,1);
\draw [->](1,1)--(3.5,4.5);
\node [right] at (0,4){$L^E $};
\node [above] at (2.5,2.5){$\ell_2$};
\node [below] at (0,-0.3){Fig.4};
\node [below] at (4,0){$\epsilon$};
\node [above] at (0.5,0.9){$\ell_3$};
\end{tikzpicture}
 $\ell_2:L^E(\epsilon)= 4\pi \epsilon+\ln|a_2|$,  $\ell_1:L^E(\epsilon)= 2\pi \epsilon+L^E$,  $\ell_1^{\prime}:L^E(\epsilon)= 2\pi \epsilon+C$, $\ell_3:L^E(\epsilon)= L^E$.
 \par
 We only present the graph for $L^E>0$, for $L^E=0$, it is similar.
  \begin{remark}
 Note that the graph $(\epsilon,L^E(\epsilon))$   depends on $E$. The graph can determine the  spectrum $\sigma(H)$.
 \begin{itemize}
 \item
 $\{E\in\R: L^E=0 \}\subset \sigma(H)$.
 \item
 If  $L^E>0$, then  $E\in \sigma(H)$ if and only if the graph of  $L^E(\epsilon)$ is either Fig.1 or Fig.2\cite{avila2009global}.
\end{itemize}
 \end{remark}
\section{The proof of  Theorem \ref{Maintheorem}}

Before giving  the proof of Theorem \ref{Maintheorem}, two lemmas are necessary.
\begin{lemma}\label{EstimateA_j}
Suppose $B_j=\left(
               \begin{array}{cc}
                 v_j & -1 \\
                 1 & 0 \\
               \end{array}
             \right),
$ and $|v_j|>2$, then we have
\begin{equation*}
  || \prod_{j=1} ^nB_j||\geq  \prod_{j=1} ^n(|v_j|-1 ) .
\end{equation*}

\end{lemma}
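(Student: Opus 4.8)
The plan is to bound the operator norm from below by evaluating the product on a single well-chosen unit vector and tracking only one coordinate. Since $\norm{M} \geq \norm{Mw}$ for every unit vector $w$, and since the hypothesis $\abs{v_j}>2$ together with the claimed bound $\prod_{j=1}^n(\abs{v_j}-1)$ are both symmetric under relabeling the indices, it suffices to treat the product in the order $M_n = B_n B_{n-1}\cdots B_1$ and to apply it to $w=(1,0)^{\top}$. I would set $\binom{p_k}{q_k} = B_k\binom{p_{k-1}}{q_{k-1}}$ with $\binom{p_0}{q_0}=\binom{1}{0}$, so that $\binom{p_n}{q_n}=M_n\binom{1}{0}$. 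Reading off the two rows of $B_k$ gives $p_k = v_k p_{k-1}-q_{k-1}$ and $q_k=p_{k-1}$, hence the three-term recurrence $p_k = v_k p_{k-1}-p_{k-2}$ with the initial data $p_{-1}=0$, $p_0=1$.

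The heart of the argument is an induction establishing two facts simultaneously: the monotonicity $\abs{p_k}\geq \abs{p_{k-1}}$ and the lower bound $\abs{p_k}\geq \prod_{j=1}^k(\abs{v_j}-1)$. For the base case, $p_1=v_1$ gives $\abs{p_1}=\abs{v_1}\geq \abs{v_1}-1=(\abs{v_1}-1)\abs{p_0}$, and $\abs{p_1}=\abs{v_1}>2>1=\abs{p_0}$. For the inductive step, assuming $\abs{p_{k-2}}\leq \abs{p_{k-1}}$ I would estimate
\[
  \abs{p_k}=\abs{v_k p_{k-1}-p_{k-2}}\geq \abs{v_k}\abs{p_{k-1}}-\abs{p_{k-2}}\geq (\abs{v_k}-1)\abs{p_{k-1}}.
\]
Because $\abs{v_k}-1>1$, this yields both $\abs{p_k}>\abs{p_{k-1}}$ (which propagates the monotonicity to the next stage) and, upon iterating the displayed inequality down to $\abs{p_0}=1$, the desired product bound $\abs{p_k}\geq \prod_{j=1}^k(\abs{v_j}-1)$.

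Combining the two ingredients finishes the proof: $\norm{M_n}\geq \norm{M_n\binom{1}{0}}=\sqrt{p_n^2+q_n^2}\geq \abs{p_n}\geq \prod_{j=1}^n(\abs{v_j}-1)$. The only point requiring care — and the sole place the hypothesis $\abs{v_j}>2$ is used in an essential way — is the propagation of the monotonicity $\abs{p_{k-1}}\geq \abs{p_{k-2}}$ through the induction; it is precisely the strict inequality $\abs{v_k}-1>1$ that guarantees the dominant coordinate keeps growing, so that the subtracted term $\abs{p_{k-2}}$ can always be absorbed. There is no analytic difficulty beyond carrying both statements together in the induction.
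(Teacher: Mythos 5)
Your proposal is correct and takes essentially the same route as the paper: your $p_k,q_k$ are precisely the first-column entries $b_{11}^k,b_{21}^k$ of the partial product, and your simultaneous induction on $\abs{p_k}\geq(\abs{v_k}-1)\abs{p_{k-1}}$ and $\abs{p_k}\geq\abs{p_{k-1}}$ is exactly the paper's induction on $\abs{b_{11}^k}\geq\prod_{j=1}^k(\abs{v_j}-1)$ and $\abs{b_{21}^k}\leq\abs{b_{11}^k}$, merely rewritten as the scalar three-term recurrence $p_k=v_kp_{k-1}-p_{k-2}$. One cosmetic point: since the $v_j$ are complex in the paper's application, write $\sqrt{\abs{p_n}^2+\abs{q_n}^2}$ rather than $\sqrt{p_n^2+q_n^2}$; the argument is otherwise unchanged.
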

\begin{proof}
  Lemma \ref{EstimateA_j} can be proved by induction. For this purpose,
  let
  \begin{equation*}
       \prod_{j=1} ^kB_j=\left(
                          \begin{array}{cc}
                            b_{11}^k &  b_{12}^k \\
                             b_{21}^k &  b_{22}^k \\
                          \end{array}
                        \right).
  \end{equation*}
Actually,  we can prove  the following two inequalities,
 \begin{equation}\label{induction1}
  |b_{11}^k|\geq  \prod_{j=1} ^k(|v_j|-1 )
 \end{equation}
  and
 \begin{equation}\label{induction2}
  |b_{21}^k|\leq  |b_{11}^k|.
 \end{equation}
   Clearly, (\ref{induction1}) and (\ref{induction2}) hold for $k=1$. 
   Suppose  (\ref{induction1}) and (\ref{induction2}) hold for $k$,
  then  by    induction,
  \begin{eqnarray*}
    |b_{11}^{k+1}| &=& |   v_{k+1}  b_{11}^{k}-b_{21}^k| \\
     &\geq&    |  v_{k+1}||b_{11}^{k}|-|b_{11}^k|\\
      &\geq&   \prod_{j=1}^{k+1}  (|v_j|-1),
  \end{eqnarray*}
  and
  \begin{equation*}
    |b_{21}^{k+1}|=|b_{11}^k|\leq  |b_{11}^{k+1}|.
  \end{equation*}
  This implies (\ref{induction1}) and (\ref{induction2}) hold for $k+1$.
  \par
\end{proof}
\begin{lemma}\label{LeMinestimate}
Suppose $e^{4\pi \epsilon_0}\geq 10$ and $a_1>0$. Then
for any $E\in \mathbb{R}$,we have 
 \begin{equation*}
   \sup_{\delta\in \{\epsilon_0,2\epsilon_0\}} \inf_{x\in\mathbb{R}}|E  -a_1e^{-2\pi \delta}e^{2\pi ix} -a_1e^{2\pi \delta}e^{-2\pi ix} |\geq \frac{19}{60}a_1 e^{4\pi \epsilon_0} .
 \end{equation*}
\end{lemma}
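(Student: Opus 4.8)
The plan is to read the inner infimum as the distance from the real point $E$ to an ellipse, and then to exploit that the two admissible values of $\delta$ produce two concentric ellipses whose horizontal vertices lie far apart, so that $E$ cannot be close to both. Throughout write $d_\delta(E):=\inf_{x\in\R}\bigl|E-a_1e^{-2\pi\delta}e^{2\pi i x}-a_1e^{2\pi\delta}e^{-2\pi i x}\bigr|$ and $T:=\tfrac{19}{60}a_1e^{4\pi\epsilon_0}$; the goal is $\max_{\delta\in\{\epsilon_0,2\epsilon_0\}}d_\delta(E)\ge T$.

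First I would set $z=e^{2\pi i x}$ and separate real and imaginary parts, obtaining
\begin{equation*}
E-a_1e^{-2\pi\delta}e^{2\pi i x}-a_1e^{2\pi\delta}e^{-2\pi i x}=\bigl(E-A_\delta\cos2\pi x\bigr)+i\,B_\delta\sin2\pi x,
\end{equation*}
with $A_\delta=2a_1\cosh(2\pi\delta)$ and $B_\delta=2a_1\sinh(2\pi\delta)$, so that $d_\delta(E)$ is the distance from $(E,0)$ to the ellipse with horizontal semi-axis $A_\delta$ and vertical semi-axis $B_\delta$. Putting $t=\cos2\pi x\in[-1,1]$ and using $A_\delta^2-B_\delta^2=4a_1^2$, the squared distance becomes the upward parabola $g(t)=4a_1^2t^2-2A_\delta E\,t+E^2+B_\delta^2$ with vertex at $t^\ast=E\cosh(2\pi\delta)/(2a_1)$. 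Assuming $E\ge0$ (the case $E<0$ is symmetric), a short case analysis on whether $t^\ast\in[-1,1]$ gives
\begin{equation*}
d_\delta(E)=\begin{cases}\sinh(2\pi\delta)\sqrt{4a_1^2-E^2},&0\le E\le \tfrac{2a_1}{\cosh(2\pi\delta)},\\ |E-A_\delta|,&E\ge \tfrac{2a_1}{\cosh(2\pi\delta)}.\end{cases}
\end{equation*}
In particular $d_\delta$ decreases on $[0,A_\delta]$, increases on $[A_\delta,\infty)$, and vanishes only at the vertex $E=A_\delta$.

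Next I would run the dichotomy. Since $d_\delta$ is unimodal with minimum at $A_\delta$ and $d_\delta(0)=B_\delta$ sits well above $T$, the bad set $\{E\ge0:d_\delta(E)<T\}$ is contained in $(A_\delta-T,A_\delta+T)$. Apply this with $\delta=2\epsilon_0$: if $d_{2\epsilon_0}(E)\ge T$ there is nothing to prove, so assume $E>A_{2\epsilon_0}-T$. The key point is that $A_{2\epsilon_0}-T$ already exceeds $A_{\epsilon_0}$, so $(E,0)$ lies beyond the right vertex of the smaller ellipse and hence $d_{\epsilon_0}(E)=E-A_{\epsilon_0}>(A_{2\epsilon_0}-A_{\epsilon_0})-T$. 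Thus it suffices to prove the separation bound $A_{2\epsilon_0}-A_{\epsilon_0}\ge 2T$, which then forces $d_{\epsilon_0}(E)>T$.

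Writing $r=e^{2\pi\epsilon_0}$, so $e^{4\pi\epsilon_0}=r^2\ge10$, the separation bound is the elementary inequality $r^2-r+r^{-2}-r^{-1}\ge\tfrac{19}{30}r^2$, equivalently $\tfrac{11}{30}r^2\ge r+r^{-1}-r^{-2}$, which I would verify by checking that the difference of the two sides is increasing for $r\ge\sqrt{10}$ and nonnegative at $r=\sqrt{10}$. I expect the only genuinely delicate point to be the bookkeeping of the constant $\tfrac{19}{60}$: one must confirm (i) that $d_{2\epsilon_0}(E)\ge T$ on the regime $0\le E\le \tfrac{2a_1}{\cosh(4\pi\epsilon_0)}$ — so the bad set does not reach into the square-root branch and really is the stated interval about $A_{2\epsilon_0}$ — and (ii) the separation bound of the previous step. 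Both reduce to one-variable inequalities in $r$ that are comfortably slack once $r^2\ge10$; indeed the hypothesis $e^{4\pi\epsilon_0}\ge10$ is exactly what pushes the vertices $A_{\epsilon_0}\approx a_1r$ and $A_{2\epsilon_0}\approx a_1r^2$ far enough apart for the dichotomy to close.
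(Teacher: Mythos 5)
Your proposal is correct and follows essentially the same route as the paper: the paper likewise reads the infimum as the distance from $(E,0)$ to the ellipse with semi-axes $a_\delta=2a_1\cosh(2\pi\delta)$ and $b_\delta=2a_1\sinh(2\pi\delta)$, and its three cases on $|E|$ (beyond the midpoint of the two right vertices use $S_{\epsilon_0}$; between $4a_1^2/a_{2\epsilon_0}$ and the midpoint use the vertex branch of $S_{2\epsilon_0}$; below that use the square-root branch with $\cos2\pi x_E=a_{2\epsilon_0}E/(4a_1^2)$) reduce to exactly your inequalities (i) and (ii), with your separation bound $a_{2\epsilon_0}-a_{\epsilon_0}\geq 2T$ being precisely the paper's midpoint estimate. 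Both one-variable inequalities indeed hold with slack once $e^{4\pi\epsilon_0}\geq 10$, so your dichotomy closes just as the paper's case analysis does.
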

\begin{proof}
Let $ a_\delta=  a_1e^{2\pi \delta}+a_1e^{-2\pi \delta}$ and $ b_\delta= a_1e^{2\pi \delta}-a_1e^{-2\pi \delta}$ with $\delta\in \{\epsilon_0,2\epsilon_0\}$.
Then  it is easy to check that
\begin{equation*}
 \inf_{x\in\mathbb{R}}|E  -a_1e^{-2\pi \delta}e^{2\pi ix} -a_1e^{2\pi \delta}e^{-2\pi ix} |
\end{equation*}
 is equal to the
distance $\mbox{dist}(E,S_\delta)$ between point $(E,0)$ and ellipse
$S_\delta$ given by
\begin{equation*}
\frac{x^2}{a_\delta^2}+  \frac{y^2}{b_\delta^2}=1.
\end{equation*}
If $|E|\geq  \frac{a_{\epsilon_0}+a_{2\epsilon_0}}{2},$
  one has
\begin{eqnarray*}
  \mbox{dist}(E,S_{\epsilon_0}) &= &\inf_{x\in\mathbb{R}}|E  -a_1e^{-2\pi \epsilon_0}e^{2\pi ix} -a_1e^{2\pi \epsilon_0}e^{-2\pi ix} |  \\
  &= & |E  -a_1e^{-2\pi \epsilon_0} -a_1e^{2\pi \epsilon_0} |\\
   &\geq &  \frac{19}{60}a_1 e^{4\pi \epsilon_0} .
\end{eqnarray*}
If $\frac{a_{2\epsilon_0}^2-b_{2\epsilon_0}^2}{a_{2\epsilon_0}}\leq|E|\leq  \frac{a_{\epsilon_0}+a_{2\epsilon_0}}{2}$,   
we have
\begin{eqnarray*}
  \mbox{dist}(E,S_{2\epsilon_0}) &= &\inf_{x\in\mathbb{R}}|E  -a_1e^{-4\pi \epsilon_0}e^{2\pi ix} -a_1e^{4\pi \epsilon_0}e^{-2\pi ix} |  \\
  &= & |E  -a_1e^{-4\pi \epsilon_0} -a_1e^{4\pi \epsilon_0} |\\
   &\geq &   \frac{19}{60}a_1 e^{4\pi \epsilon_0} .
\end{eqnarray*}
If $|E|\leq\frac{a_{2\epsilon_0}^2-b_{2\epsilon_0}^2}{a_{2\epsilon_0}} $,   
we also have
\begin{eqnarray*}
  \mbox{dist}(E,S_{2\epsilon_0}) &= &|E  -a_1e^{-4\pi \epsilon_0}e^{2\pi ix_E} -a_1e^{4\pi \epsilon_0}e^{-2\pi ix_E} |  \\
   &\geq &  \frac{19}{60}a_1 e^{4\pi \epsilon_0}  ,
\end{eqnarray*}
where $\cos 2 \pi x_E=\frac{a_{2\epsilon_0} E}{ a_{2\epsilon_0}^2-b_{2\epsilon_0}^2}$.
\par
Putting all the cases together, we get
\begin{equation*}
   \sup_{\delta\in \{\epsilon_0,2\epsilon_0\}} \inf_{x\in\mathbb{R}}|E  -a_1e^{-2\pi \delta}e^{2\pi ix} -a_1e^{2\pi \delta}e^{-2\pi ix} |\geq \frac{19}{60}a_1 e^{4\pi \epsilon_0} .
 \end{equation*}
\end{proof}

\textbf{Proof of Theorem \ref{Maintheorem}}
\begin{proof}

Since the Lyapunov exponent $L(\alpha,A)$ is upper semi-continuous with respect to $\alpha$ \cite{craig1983subharmonicity},  we only need to prove Theorem \ref{Maintheorem} for
$\alpha\in \mathbb{R}\backslash \mathbb{Q}$.
Without loss of generality, assume $a_1,a_2>0$.
Let  $\epsilon_0>0$ be such that $e^{4\pi \epsilon_0}=(\frac{a_1}{a_2})^{\frac{1}{2}}\geq 10$.
 Applying  Lemma \ref{LeMinestimate}, we have  for any $E\in \mathbb{R}$, there exists $\delta\in \{\epsilon_0,2\epsilon_0\}$ such that
 \begin{equation}\label{2case}
     \inf_{x\in\mathbb{R}}|E  -a_1e^{-2\pi \delta}e^{2\pi ix} -a_1e^{2\pi \delta}e^{-2\pi ix} |\geq \frac{19}{60}a_1 e^{4\pi \epsilon_0}.
 \end{equation}

 Case 1 : $\delta=\epsilon_0$.

   We first   estimate  the left upper element of matrix $A$.
Computing directly, one has for any $E, x\in \mathbb{R}$,
 \begin{equation*}
 |E- 2a_1\cos2\pi (x+i\epsilon_0)-2a_2\cos4\pi (x+i\epsilon_0)|\;\;\;\;\;\;\;\;\;\;\;\;\;\;\;\;\;\;\;\;\;\;\;\;\;\;\;\;\;\;\;\;\;\;\;\;\;\;\;\;\;\;\;\;\;\;\;\;\;\;\;\;\;\;\;\;\;\;\;\;\;\;\;\;\;\;
 \end{equation*}
 \begin{eqnarray}
 \nonumber
     &=& |E-  a_1e^{2\pi \epsilon_0}e^{-2\pi i x}- a_1e^{-2\pi \epsilon_0}e^{2\pi ix}-  a_2e^{4\pi \epsilon_0}e^{-4\pi ix}- a_2e^{-4\pi \epsilon_0}e^{4\pi ix}| \\
     \nonumber
     &\geq&  \frac{19}{ 60 }a_1e^{4\pi\epsilon_0}-a_2e^{4\pi \epsilon_0}-a_2e^{-4\pi \epsilon_0}\\
     \nonumber
      &\geq& a_1e^{4\pi \epsilon_0}(\frac{19}{60 }- \frac{a_2}{a_1}-\frac{a_2}{a_1}e^{-8\pi \epsilon_0})  \\
       \nonumber
     &>&    2 . 
 \end{eqnarray}
 By Lemma \ref{EstimateA_j}, we have   that the left upper element of $A_n(x+i\epsilon_0)$ satisfies the following estimate
  \begin{equation*}
     |(A_n(x+i\epsilon_0)\left(
                           \begin{array}{c}
                             1 \\
                             0 \\
                           \end{array}
                         \right),\left(
                                   \begin{array}{c}
                                     1 \\
                                     0 \\
                                   \end{array}
                                 \right))|\;\;\;\;\;\;\;\;\;\;\;\; \;\;\;\;\;\;\;\;\;\;\;\;\;\;\;\;\;\;\;\;\;\;\;\;\;\;\;\;\;\;\;\;\;\;\;\;\;\;\;\;\;\;\;\;\;\;\;\;\;
  \end{equation*}
  \begin{equation*}
    \;\;\;\;\;\;\;\;\;\;\;\;\;\;   \geq
     \prod_{j=0}^{n-1} (|E- 2a_1\cos2\pi (x+i\epsilon_0+j\alpha)-2a_2\cos4\pi (x+i\epsilon_0+j\alpha)|-1).
  \end{equation*}
   This implies
  \begin{equation}\label{Estepsilon_0}
    ||A_n(x+i\epsilon_0) || \geq
     \prod_{j=0}^{n-1} (|E- 2a_1\cos2\pi (x+i\epsilon_0+j\alpha)-2a_2\cos4\pi (x+i\epsilon_0+j\alpha)|-1).
  \end{equation}
  Furthermore, we have
  \begin{eqnarray*}
    L^E(\epsilon_0) &=& \lim_{n\rightarrow \infty}\frac{1}{n}\int_{\mathbb{T}} \ln  ||A_n(x+i\epsilon_0) || dx \\
     &\geq&  \int_{\mathbb{T}} \ln  (|E- 2a_1\cos2\pi (x+i\epsilon_0)-2a_2\cos4\pi (x+i\epsilon_0)|-1)dx\\
      &\geq& \int _{\mathbb{T}}\ln   |E- 2a_1\cos2\pi (x+i\epsilon_0)|dx +\int _{\mathbb{T}}\ln (1- \frac{| 2a_2\cos4\pi (x+i\epsilon_0)|+1  }{|E- 2a_1\cos2\pi (x+i\epsilon_0)|})dx  \\
      &=&  I+II.
  \end{eqnarray*}
  Now we will estimate I and II separately.
  First, one has
  \begin{eqnarray*}
    II &\geq& \ln (1-\frac{60}{19} \frac{(a_2e^{4\pi \epsilon_0}+a_2e^{-4\pi \epsilon_0}+1)}{a_1e^{4\pi \epsilon_0}})\\
     &\geq& - 10 (\frac{a_2}{a_1}  )^{\frac{1}{2}},
  \end{eqnarray*}
where the first inequality holds  by (\ref{2case}).
\par
We use Herman's subharmonic  method to estimate I:
\begin{eqnarray*}
  I &=&  \int _{\mathbb{T}}\ln   |E-  a_1e^{2\pi \epsilon_0}e^{-2\pi ix}-  a_1e^{-2\pi \epsilon_0}e^{2\pi ix}|dx  \\
    &=&   \int _{|z|=1}\ln   |E z-  a_1e^{2\pi \epsilon_0}- a_1e^{-2\pi \epsilon_0}z^2|dz \\
   &\geq & 2\pi \epsilon_0+   \ln a_1 .
\end{eqnarray*}
Thus
\begin{equation*}
   L^E(\epsilon_0)\geq  2\pi \epsilon_0+   \ln a_1 - 10 (\frac{a_2}{a_1}  )^{\frac{1}{2}}.
\end{equation*}
Suppose point $(\epsilon_0, L^E(\epsilon_0))\in \ell_2$. Then
one has
\begin{equation*}
    \ln a_2 +4\pi \epsilon_0\geq 2\pi \epsilon_0+   \ln a_1- 10 (\frac{a_2}{a_1}  )^{\frac{1}{2}}.
\end{equation*}
This implies that
\begin{equation*}
      \frac{40}{3} (\frac{a_2}{a_1})^{\frac{1}{2}}\geq  \ln  \frac{a_1}{a_2}.
\end{equation*}
However this is impossible because $a_2\leq \frac{a_1}{100}$.
\par
Now that  $(\epsilon_0, L^E(\epsilon_0))\notin \ell_2$,
we must have
\begin{eqnarray*}
  L^E +2\pi \epsilon_0 &\geq&  L^E(\epsilon_0) \\
   &\geq&   2\pi \epsilon_0+   \ln a_1- 10 (\frac{a_2}{a_1}  )^{\frac{1}{2}},
\end{eqnarray*}
This implies
\begin{equation*}
   L^E \geq \ln a_1 - 10 (\frac{a_2}{a_1}  )^{\frac{1}{2}}.
\end{equation*}

\par
Case 2: $\delta=2\epsilon_0$.

In this case, we have
 \begin{equation*}
     \inf_{x\in\mathbb{R}}|E  -a_1e^{-4\pi \epsilon_0}e^{2\pi ix} -a_1e^{4\pi\epsilon_0}e^{-2\pi ix} |\geq \frac{19}{60}a_1 e^{4\pi \epsilon_0} .
 \end{equation*}
 The proof of case 2 is similar to  that of case 1.  We give the details below.
Computing directly, one has for any $E, x\in \mathbb{R}$,
 \begin{equation*}
 |E- 2a_1\cos2\pi (x+2i\epsilon_0)-2a_2\cos4\pi (x+2i\epsilon_0)|\;\;\;\;\;\;\;\;\;\;\;\;\;\;\;\;\;\;\;\;\;\;\;\;\;\;\;\;\;\;\;\;\;\;\;\;\;\;\;\;\;\;\;\;\;\;\;\;\;\;\;\;\;\;\;\;\;\;\;\;\;\;\;\;\;\;
 \end{equation*}
 \begin{eqnarray*}
     &=& |E-  a_1e^{4\pi \epsilon_0}e^{-2\pi i x}- a_1e^{-4\pi \epsilon_0}e^{2\pi ix}-  a_2e^{\pi \epsilon_0}e^{-4\pi ix}- a_2e^{-8\pi \epsilon_0}e^{4\pi ix}| \\
     &\geq&  \frac{19}{ 60 }a_1e^{4\pi\epsilon_0}-a_2e^{8\pi \epsilon_0}-a_2e^{-8\pi \epsilon_0}\\
      &\geq& a_1e^{4\pi \epsilon_0}(\frac{19}{ 60  }- e^{4\pi \epsilon_0}\frac{a_2}{a_1}-e^{-12\pi \epsilon_0}\frac{a_2}{a_1})  \\
     &>&    2 .
 \end{eqnarray*}
As in (\ref{Estepsilon_0}), we have
  \begin{equation*}
    ||A_n(x+2i\epsilon_0) || \geq
     \prod_{j=0}^{n-1} (|E- 2a_1\cos2\pi (x+2i\epsilon_0+j\alpha)-2a_2\cos4\pi (x+2i\epsilon_0+j\alpha)|-1).
  \end{equation*}
  Then we have
  \begin{eqnarray*}
    L^E(2\epsilon_0) &=& \lim_{n\rightarrow \infty}\frac{1}{n}\int _{\mathbb{T}}\ln  ||A_n(x+2i\epsilon_0) || dx \\
     &\geq&  \int _{\mathbb{T}}\ln  (|E- 2a_1\cos2\pi (x+2i\epsilon_0)-2a_2\cos4\pi (x+2i\epsilon_0)|-1)dx\\
      &\geq& \int_{\mathbb{T}} \ln   |E- 2a_1\cos2\pi (x+2i\epsilon_0)|dx +\int _{\mathbb{T}}\ln (1- \frac{| 2a_2\cos4\pi (x+2i\epsilon_0)|+1  }{|E- 2a_1\cos2\pi (x+2i\epsilon_0)|})dx  \\
      &=&  I+II.
  \end{eqnarray*}
 Following the discussion of  case 1, I and II have the following lower bounds:
  \begin{eqnarray*}
    II &\geq& \ln (1- \frac{60}{19}\frac{(a_2e^{8\pi \epsilon_0}+a_2e^{-8\pi \epsilon_0}+1)}{a_1e^{4\pi \epsilon_0}})\\
     &\geq& - 10 (\frac{a_2}{a_1}) ^{\frac{1}{2}} .
  \end{eqnarray*}
and
\begin{eqnarray*}
  I &=&  \int _{\mathbb{T}} \ln   |E-  a_1e^{4\pi \epsilon_0}e^{-2\pi ix}-  a_1e^{-4\pi \epsilon_0}e^{2\pi ix}|dx  \\
   &\geq & 4\pi \epsilon_0+   \ln a_1 .
\end{eqnarray*}
Thus
\begin{equation*}
   L^E(2\epsilon_0)\geq  4\pi \epsilon_0+   \ln a_1-   10 (\frac{a_2}{a_1})^{\frac{1}{2}}.
\end{equation*}
Suppose point $(2\epsilon_0, L^E(2\epsilon_0))\in \ell_2$. Then
one has
\begin{equation*}
    \ln a_2 +8\pi \epsilon_0\geq 4\pi \epsilon_0+   \ln a_1-  10 (\frac{a_2}{a_1})^{\frac{1}{2}}.
\end{equation*}
This implies that
\begin{equation*}
     20 ( \frac{a_2}{a_1})^{\frac{1}{2}}\geq  \ln  \frac{a_1}{a_2},
\end{equation*}
impossible because $a_2\leq \frac{a_1}{100}$.
\par
Now that   $(2\epsilon_0, L^E(2\epsilon_0))\notin \ell_2$,
we must have
\begin{eqnarray*}
  L^E +4\pi \epsilon_0 &\geq&  L^E(2\epsilon_0) \\
   &\geq&   4\pi \epsilon_0+   \ln a_1-   10(\frac{a_2}{a_1})^{\frac{1}{2}}.
\end{eqnarray*}
therefore
\begin{equation*}
   L^E \geq \ln a_1-   10(\frac{a_2}{a_1})^{\frac{1}{2}}.
\end{equation*}

\end{proof}
 \section*{Acknowledgments}
S.J. is a 2014-15 Simons Fellow. This research was partially
 supported by NSF DMS-1401204.

\footnotesize

\end{document}